\newtheorem{theorem}{\textbf{Theorem}}
\newtheorem{definition}{\textbf{Definition}}
\newtheorem{remark}{\textbf{Remark}}
\newtheorem{assumption}{\textbf{Assumption}}
\newtheorem{lemma}{\textbf{Lemma}}
\begin{document}
\newgeometry{top=1in,bottom=0.75in,outer=0.75in,inner=0.75in}
\title{\LARGE \bf
Resilient Energy Allocation Model for Supply Shortage Outages}

\author{
    \IEEEauthorblockN{Miguel Alberto C. Mercado}
    \IEEEauthorblockA  {
        University of the Philippines, Diliman\\
        mcmercado4@up.edu.ph
    }
    \and
    \IEEEauthorblockN{Roy Dong}
    \IEEEauthorblockA  {
        University of California, Berkeley\\
        roydong@eecs.berkeley.edu
    }
    \and
    \IEEEauthorblockN{Allan Nerves}
    \IEEEauthorblockA  {
        University of the Philippines, Diliman\\
        anerves@eee.upd.edu.ph
    }
}

\maketitle
\thispagestyle{empty}
\pagestyle{empty}

\begin{abstract}
Supply Shortage Outages are a major concern during peak demand for developing countries. In the Philippines, commercial loads have unused backup generation of up to 3000 MW, at the same time there are shortages of as much as 700 MW during peak demand. This gives utilities the incentive to implement Demand Response programs to minimize this shortage. But when considering Demand Response from a modeling perspective, social welfare through profit is always the major objective for program implementation. That isn't always the case during an emergency situation as there can be a trade-off between grid resilience and cost of electricity.

The question is how the Distribution Utility (DU) shall optimally allocate the unused generation to meet the shortage when this trade-off exists. We formulate a combined multi-objective optimal dispatch model where we can make a direct comparison between the least-cost and resilience objectives. 

We find that this trade-off is due to the monotonically increasing nature of energy cost functions. If the supply is larger than the demand, the DU can perform a least-cost approach in the optimal dispatch since maximizing the energy generated in this case can lead to multiple solutions. We also find in our simulation that in cases where the supply of energy from the customers is less than shortage quantity, the DU must prioritize maximizing the generated energy rather than minimizing cost.
\end{abstract}
\section{INTRODUCTION}
There is potential to increase the reliability\footnote{Or resiliency.} of electricity grids during Supply Shortage Outages (SSO). This can done through the use of the in-house distributed generation of commercial and industrial loads. In fact, in the Philippines, there is as much as 3000 MW worth of unused distributed power through these generators even with energy crisis of as much as 700MW~\cite{intro}. Therefore, there is an incentive for the regulator to implement a Demand Side Management (DSM) program~\cite{DSM} to allow distribution utilities (DU) to contract certain commercial establishments to either (1) reduce energy consumption, or (2) use their in-house generators during peak demand. We put our attention on the latter case.

In the literature, managing energy demand through incentive contracts fall in the realm of Demand Response (DR)~\cite{DR-berk}. The US Department of Energy defines Demand Response as "Changes in electric usage by end-use customers from their normal consumption patterns in response to changes in the price of electricity over time, or to incentive payments designed to induce lower electricity use at times of high wholesale market prices or when system reliability is jeopardized"~\cite{DR_def}. Specifically, we are dealing with the Interruptible Load Program (ILP)~\cite{DR-berk}.

For most cases in DR, and DSM in general, profits, social welfare, and prices are put into the spotlight in the design of these programs~\cite{DSM, DR-berk}. In certain cases, such as emergency situations, this approach may leave the implementation of certain DR schemes ineffective in increasing resiliency. Most of the time during emergency situations, the amount of shortage in a distribution system would be too high for the PCs to be able to fully compensate the shortage. This is because generation is usually not part of the core competencies of the PCs.

Resilience must then be explicitly considered because there may be a trade-off between the minimization of cost, and the minimization of the shortage. Therefore, this paper focuses on creating a mechanism that considers both the allocation of the amount of energy each PC must generate through a least-cost optimal dispatch, or by increasing grid resiliency. This is done through a modified optimal energy dispatch model for the distributed generators of the PCs~\cite{elec-market}.

The rest of the paper is organized as follows: in Section II, we review the different DR mechanism with the goal of optimizing social welfare through profit~\cite{MWG}. In Section III, we develop the agent model and make assumptions on their cost functions. In Section IV and V, we formulate the optimal dispatch models, and analyze their properties. In Section VI, we show a simulation on the proposed model. We conclude this paper in Section VII.
\section{LITERATURE REVIEW}
DR techniques are used to shift consumption patterns as a solution to reduce energy shortage~\cite{DSM, DR-berk}. The drawback to using DR is that these methods do not necessarily reduce energy consumption as shifting demand may lead to profiles with new peaks~\cite{DSM}. To fix this problem, DR methods can also involve methods that induce consumers to generate their own energy~\cite{DR-berk}.

There are two kinds of DR strategies that are currently being used: incentive-based and time-based DR~\cite{DR-berk}. Incentive-based DR schemes refer to programs where DUs give incentives to customers to influence consumption and generation, while time-based DR schemes refer to mechanisms where DUs modify prices throughout the day to influence consumption patterns~\cite{DR-over}.

Work has been done to discuss a competitive market model for demand response assuming firms are price-takers under a profit-maximization model~\cite{RRL1, RRL2}. Gadham and Ghose show how social welfare, as the sum of consumer and producer surplus, is affected when the equilibrium price change through a sensitivity analysis~\cite{RRL3}. Chen and Low model social welfare maximization through a residential demand response scheme where each household operates different appliances~\cite{RRL4}. Nijhuis, et al. implemented a demand response program at a franchise level using the elasticity of  aggregated demand~\cite{RRL5}. Also, we see that Sabounchi, et al. defined social welfare as the preferences of a consumer using the aggregate energy consumed of all devices in the household~\cite{RRL6}. Chen, et al. created two market models for DR under the assumption of a perfect competition, and an oligopoly~\cite{RRL7}. Also, Su, and Kirschen quantify how social welfare, defined as the sum of the utility functions, is affected as the number of participating customers (PC) increases~\cite{RRL8}.

Common among the incentive-based schemes is that DUs contract to affect consumption rather than explicitly contract to generate energy. Another issue with most DR mechanisms is that they mostly consider social welfare through profit optimization. In cases where there are SSOs, this objective may lead to the ineffective implementation of the program because of a possible trade-off between cost and resilience.

With the Philippines, since there are a lot commercial establishments have their own distributed generators, it is more efficient to contract with the latter case. With that in mind, it is also not explicitly mentioned how the DU optimally allocates the amount of shortage each PC is required to generate. Therefore, we focus our attention on how the DU shall allocate the amount of shortage through a modified optimal dispatch~\cite{elec-market}.

Traditionally, optimal dispatch problems have the objective of minimizing the cost to buy a certain amount of power from the generation sector~\cite{elec-market}. Barley and Winn study choosing an optimal energy mix with corresponding dispatch strategies for microgrids~\cite{RRL9}. Traditionally, optimal dispatch problems are implemented for generators competing in the transmission system~\cite{RRL10}. Also, it is standard to model dispatch problems as a deterministic model, but Viviani and Heydt model stochastic optimal dispatch problems where the parameters are random variables, and the agents are risk-averse~\cite{RRL11}. Research has also been done by Baran and El-Markabi to model optimal dispatch in distribution systems with the objective of minimizing the total reactive power dispatched to regulate the voltage in distribution feeders~\cite{RRL12}. Also, DallAnese, et al. implemented the optimal dispatch of photovoltaic systems in residential distribution systems~\cite{RRL13}.

Common among all the optimal dispatch models is the knowledge that the market is operating with a business as usual. Since we are dealing with SSOs, the objective of the DU in the dispatch may change drastically. Therefore in our case, we will study whether it is still efficient to implement a least-cost optimal dispatch when resilience can also be a big issue in cases when there are SSOs.
\section{AGENT MODEL}
\subsection{Agent Cost Function}
In this section, we discuss the preliminary definitions and assumptions for the model. We begin by defining the costs for each PC.
\begin{definition}
	For a set of PCs \(\{1,...,N\}\) where the PC \(i \in \{1,...,N\}\), Let \(\theta_i \in \Theta_i\) denote the unit price of generating \(\bar{e}_i \in E_i \subseteq \mathbb{R}_{\geq 0}\) amount of energy for the PC.
	
	We define the cost of the PC \(i\) to generate \(\bar{e}_i\) to be \(C_i: E_i \times \Theta_i \rightarrow \mathbb{R}_{\geq 0}\).
\end{definition}
In a standard optimal dispatch problem, this is the cost to produce \(\bar{e}_i\) amount of energy for PC \(i\). The goal of the DU is to minimize the overall cost to society as mandated by the system regulator.

We state the assumed properties of the cost function model for each PC to generate his own electricity. We base this on the standard cost model~\cite{MWG} used in microeconomics.

\begin{assumption}
	\label{cost-assumption}
	The generator cost function \(C_i(\bar{e}_i, \theta_i) \in C^2\) is continuous, twice differentiable\footnote{\(C^2\) is the space of continuous functions with 2nd derivatives}, and monotonically increasing in \(\bar{e}_i\) and \(\theta_i\), while convex in \(\bar{e}_i\) and concave in \(\theta_i\).
\end{assumption}

This assumption is reasonable as most distributed generators do have a convex cost function~\cite{elec-market}.

Lastly, we make define the DU demand.

\begin{definition}
	The total DU demand is \(\bar{e} \in \mathbb{R}_{\geq 0}\).
\end{definition}

This implies that the DU demand \(\bar{e}\) is an external variable that is dependent on the nature of the SSO. In the economics literature, this means that the \(\bar{e}\) is exogenous~\cite{MWG}.

\subsection{Agent Generator Constraints}
In this section, we characterize the capacity of the generators being operated by the PC. In a market setting, it is sufficient to characterize the generator using an inequality constraint~\cite{elec-market}.
\begin{assumption}
	The PC has a maximum average power consumption of \(P^{max}_i \in \mathbb{R}_{> 0}\).
\end{assumption}
This can be justified because PCs generally have finite average power consumption. Since most ILPs last only a few hours, the amount of energy they would need to generate is finite.

Next we need to assume the minimum power consumption each generator shall take. In the literature~\cite{elec-market}, it is usually assumed that each generator generates a minimum of \(0W\). In our case, generating \(0W\) is akin to not participating in the contract at all, and is undesirable when one signs up for ILP.
\begin{assumption}
	The PC is always required to join the program and generate energy with a minimum average power consumption of \( P^{min}_i > 0\) where \(P^{min}_i \in \mathbb{R}\) and \(P^{max}_i > P^{min}_i\).
\end{assumption}
Since ILP programs mostly have energies as an input instead of power~\cite{ILP}, the next remark combines both the previous assumptions.
\begin{remark}
	The PC satisfies the inequality constraint \(0 < P^{min}_i \leq \frac{\bar{e}_i}{T} \leq P^{max}_i\) for the optimal dispatch problem.
\end{remark}
In this model, \(\frac{\bar{e}_i}{T} = P_i\) where \(P_i\) is assumed to be the average power consumed during the ILP.
\section{Cost-Based Optimal Dispatch Model}
In this Section, we discuss the least-cost based model for being able to do optimal energy dispatch.
\subsection{Market Clearing Constraint}
First, we state an assumption\footnote{Only valid in this section.} on the nature of the supply of the DGs.
\begin{assumption}
	\label{market-clearing-assumption}
	The PCs can always supply as much as the DU demands.
\end{assumption}
The previous assumption is only valid when the sum of the maximum generator constraints satisfy the following relationship:
\begin{equation}
	\label{market-clearing}
	T\sum_{i=1}^{N}P_i^{max} \geq \bar{e}.
\end{equation}

This is a strong assumption because it is not always the case that the DGs of the PCs can supply enough to clear the shortage. We will see in the section V that if this property does not hold, the feasibility set in the optimization problem is empty.

With this, the market clearing constraint~\cite{MWG} is
\begin{equation}
	\label{market-clear}
	\sum_{i=1}^N\bar{e}_i = \bar{e}.
\end{equation}

The market clearing constraint implies that any combination of \(\bar{e}_i\) \(\forall i\) can be used as long as it satisfies \eqref{market-clear}. We call this homogeneity since each \(\bar{e}_i\) can be replaced by any \(\bar{e}_j\) for \(j \neq i\)~\cite{MWG}.
\subsection{Cost Optimization Problem}
We base our optimization problem on a dispatch that minimizes the total cost to generate \(\bar{e}\). Therefore, with the market clearing constraint and the generation constraints, the optimization problem is
\begin{equation}
\label{standard_op_prob}
	\min_{[\bar{e}_1, \bar{e}_2, ..., \bar{e}_N]} \sum_{i=1}^{N}C_i(\bar{e}_i, \theta_i).
\end{equation}
subject to:
\begin{displaymath}
	P^{min}_i \leq \frac{\bar{e}_i}{T} \leq P^{max}_i \text{ } \forall i
\end{displaymath}
\begin{displaymath}
	\sum_{i=1}^N\bar{e}_i = \bar{e}.
\end{displaymath}

The next Lemma shows when the problem is feasible with a global optimal solution~\cite{elec-market}.
\begin{lemma}
	\label{lemma1}
	If \(T\sum_{i=1}^{N}P_i^{max} \geq \bar{e}\) and Assumption \ref{cost-assumption} is satisfied, then there exists a global optimal solution, \([\bar{e}_1, ..., \bar{e}_N]^T\), that satisfies \eqref{standard_op_prob}.
\end{lemma}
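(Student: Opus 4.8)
The plan is to establish existence by the Weierstrass extreme value theorem and then promote the resulting minimizer to a global one using convexity. Concretely, I would show the feasible set is nonempty and compact, argue the objective is continuous, conclude that a minimizer exists, and finally show it is in fact global because the program is convex.

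First I would verify nonemptiness of the feasible set
\[
\mathcal{F} = \left\{ [\bar{e}_1,\dots,\bar{e}_N]^T : T P^{min}_i \le \bar{e}_i \le T P^{max}_i \ \forall i, \ \sum_{i=1}^N \bar{e}_i = \bar{e} \right\}.
\]
A candidate point can be built as a convex combination of the box endpoints, $\bar{e}_i = T P^{min}_i + \lambda\, T (P^{max}_i - P^{min}_i)$ with $\lambda = (\bar{e} - T\sum_i P^{min}_i)/(T\sum_i (P^{max}_i - P^{min}_i))$, chosen so that $\sum_i \bar{e}_i = \bar{e}$. This $\lambda$ lies in $[0,1]$, and hence the constructed point is feasible, precisely when $T\sum_i P^{min}_i \le \bar{e} \le T\sum_i P^{max}_i$. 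The hypothesis $T\sum_i P^{max}_i \ge \bar{e}$ supplies the upper inequality; the lower inequality is the natural companion condition that the total committed minimum generation not exceed demand, which I would flag as implicit in a well-posed SSO (otherwise $\mathcal{F} = \emptyset$, exactly as foreshadowed for Section V).

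Next I would establish compactness of $\mathcal{F}$: it is bounded because each coordinate is confined to $[T P^{min}_i, T P^{max}_i]$, and it is closed as the intersection of the closed half-spaces cut out by the non-strict inequality constraints with the hyperplane $\sum_i \bar{e}_i = \bar{e}$. By Assumption \ref{cost-assumption} each $C_i \in C^2$ is continuous, so the separable objective $\sum_i C_i(\bar{e}_i, \theta_i)$ is continuous on $\mathcal{F}$. Weierstrass then guarantees that the objective attains its minimum over $\mathcal{F}$, yielding a minimizer.

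Finally I would upgrade this to global optimality via convexity. Since each $C_i$ is convex in $\bar{e}_i$ (Assumption \ref{cost-assumption}) and the prices $\theta_i$ enter only as fixed parameters, the separable sum $\sum_i C_i(\bar{e}_i,\theta_i)$ is convex in $[\bar{e}_1,\dots,\bar{e}_N]^T$, while $\mathcal{F}$ is convex as an intersection of half-spaces with a hyperplane. For a convex objective over a convex set every local minimizer is global, so the minimizer produced above is globally optimal. The hard part will be the feasibility step, specifically reconciling the one-sided hypothesis with the two-sided requirement that the box and market-clearing constraints hold simultaneously; once $\mathcal{F}$ is known to be nonempty, the compactness, continuity, and convexity steps are routine.
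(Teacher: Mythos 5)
Your proof is correct and, at its core, follows the same outline as the paper's: exhibit a feasible point, then invoke convexity (Assumption \ref{cost-assumption}) for global optimality. But your version is tighter in two ways worth noting. First, you supply the existence step the paper omits: the paper passes directly from ``the feasible set is nonempty'' and ``the problem is convex'' to ``the solution is a global optimum,'' which silently presumes a minimizer exists at all; your compactness-plus-Weierstrass argument is what actually delivers that minimizer (and since Weierstrass already yields the minimum over all of $\mathcal{F}$, your final convexity step is technically redundant --- harmless, but the extreme value theorem alone finishes the job). Second, and more importantly, you correctly flag that the stated hypothesis $T\sum_{i=1}^{N}P_i^{max} \geq \bar{e}$ is only half of what feasibility requires: one also needs $\bar{e} \geq T\sum_{i=1}^{N}P_i^{min}$, since each $\bar{e}_i \geq TP_i^{min} > 0$ forces $\sum_{i}\bar{e}_i \geq T\sum_{i}P_i^{min}$. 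The paper's own proof has exactly this gap --- it picks $\bar{P}_i \geq 0$ with $\sum_{i}T(P_i^{max}-\bar{P}_i)=\bar{e}$ and then asserts $P_i^{max}-\bar{P}_i \in X_i$, which requires $P_i^{max}-\bar{P}_i \geq P_i^{min}$ and hence the very lower-bound condition the hypothesis does not supply. (A one-generator example with $P^{min}_1 = 10$, $P^{max}_1 = 20$, $T=1$, $\bar{e}=5$ satisfies the lemma's hypothesis yet has an empty feasible set.) So your explicit bookkeeping that $\lambda \in [0,1]$ is not pedantry; it isolates a condition the lemma as stated is missing, and your proof is the more complete of the two.
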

Intuitively, this means that the sum of the maximum capacities must be greater than the demand for there to be a solution. If the premise is not satisfied, then no combination of \([\bar{e}_1, ..., \bar{e}_N]^T\) will satisfy the market clearing constraint. We treat this case in the next section.
\section{Modified Optimal Dispatch Model for Emergency Situations}
In reality, such as in emergency situations due to SSOs, the amount of shortage is almost always too high such that the total amount of energy the PCs generate is not enough to cover the shortage. With that, In this section, we analyze this case where Lemma \ref{lemma1} because Assumption \ref{market-clearing-assumption} is not satisfied.  More concisely, we consider the case where the DGs don't have the capacity to clear the shortages.

To preempt the results from this section, we show that the trade-off between resilience and cost is due to the inefficiency of the least-cost objective in minimizing the shortage. This is due to Assumption \ref{cost-assumption}. With that, we form a multi-objective optimal dispatch problem that can characterize this trade-off.

\subsection{Market Clearing Constraint Breakdown}
The next Lemma formally states what happens when Assumption \ref{market-clearing-assumption}, which assures Lemma \ref{lemma1}, does not hold.
\begin{lemma}
	\label{lemma2}
	If \(T\sum_{i=1}^{N}P_i^{max} < \bar{e}\), then the feasible set that satisfies \eqref{standard_op_prob} is empty.
\end{lemma}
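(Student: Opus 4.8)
The plan is to argue by contradiction, since the claim is essentially that the per-PC upper capacity bounds and the market clearing equality cannot hold simultaneously once the aggregate capacity falls short of demand. First I would suppose, toward a contradiction, that the feasible set is non-empty, so that there exists some vector \([\bar{e}_1, \dots, \bar{e}_N]^T\) satisfying every constraint attached to \eqref{standard_op_prob}: the per-PC generation bounds \(P_i^{min} \le \bar{e}_i/T \le P_i^{max}\) for all \(i\), together with the market clearing constraint \eqref{market-clear}, namely \(\sum_{i=1}^N \bar{e}_i = \bar{e}\).

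Next I would retain only the upper halves of the generation constraints and discard the lower bounds. Multiplying \(\bar{e}_i/T \le P_i^{max}\) through by \(T > 0\) gives \(\bar{e}_i \le T P_i^{max}\) for each \(i\), and summing these \(N\) inequalities yields \(\sum_{i=1}^N \bar{e}_i \le T\sum_{i=1}^N P_i^{max}\). Chaining this with the hypothesis \(T\sum_{i=1}^N P_i^{max} < \bar{e}\) produces \(\sum_{i=1}^N \bar{e}_i < \bar{e}\).

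Finally I would compare this strict bound against the market clearing equality \(\sum_{i=1}^N \bar{e}_i = \bar{e}\), which forces \(\bar{e} < \bar{e}\), an absurdity. Hence no feasible vector can exist, and the feasible set that satisfies \eqref{standard_op_prob} is empty.

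I do not expect any genuine obstacle here: the statement is an immediate consequence of summing the upper capacity bounds, and is in effect the contrapositive of the capacity condition \eqref{market-clearing} that underlies Lemma \ref{lemma1}. The only points warranting a little care are preserving the strictness of the final inequality (inherited from the strict hypothesis) and explicitly noting that the lower bounds \(P_i^{min} > 0\) are never invoked — the emptiness is driven entirely by the \(P_i^{max}\) side of the constraints.
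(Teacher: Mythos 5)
Your proposal is correct and takes essentially the same approach as the paper: both sum the upper capacity bounds \(\bar{e}_i \le TP_i^{max}\) and chain with the hypothesis \(T\sum_{i=1}^{N}P_i^{max} < \bar{e}\) to show no point in the constraint box can satisfy the market clearing equality \eqref{market-clear}; your contradiction framing versus the paper's direct argument is a cosmetic difference. If anything, your version is slightly more careful, since the paper writes the aggregate bound as a strict inequality \(\sum_{i=1}^{N}Tx_i < \sum_{i=1}^{N}TP_i^{max}\) where \(\le\) is what actually holds (and suffices).
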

This just means that if the individual maximum capacities don't add up to the shortage, then there is no way the market clears. This is an issue because Lemma \ref{lemma2} implies there is no feasible solution to the optimal dispatch problem.

We combine Lemma \ref{lemma1} and Lemma \ref{lemma2} to produce the following Theorem.
\begin{theorem}
	\label{theorem1}
		\(T\sum_{i=1}^{N}P_i^{max} \geq \bar{e}\) iff \(\exists\) a \([\bar{e}_1, ..., \bar{e}_N]^T\) that satisfies \eqref{standard_op_prob} and Assumption \ref{cost-assumption}.
\end{theorem}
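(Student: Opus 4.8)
The plan is to prove the biconditional by splitting it into its two constituent implications and discharging each with a preceding result, since Theorem \ref{theorem1} is essentially the conjunction of Lemma \ref{lemma1} and the contrapositive of Lemma \ref{lemma2}. Throughout, Assumption \ref{cost-assumption} is treated as a standing hypothesis: it is needed to invoke Lemma \ref{lemma1}, but plays no role in the feasibility argument drawn from Lemma \ref{lemma2}.

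For the forward direction, I would assume \(T\sum_{i=1}^{N}P_i^{max} \geq \bar{e}\). Together with Assumption \ref{cost-assumption}, this is precisely the hypothesis of Lemma \ref{lemma1}, which yields a global optimal solution \([\bar{e}_1, \ldots, \bar{e}_N]^T\) to \eqref{standard_op_prob}. This establishes the existence claim directly, with no further work.

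For the converse, I would argue by contraposition against Lemma \ref{lemma2}. Suppose the capacity condition fails, i.e.\ \(T\sum_{i=1}^{N}P_i^{max} < \bar{e}\). Lemma \ref{lemma2} then asserts that the feasible set of \eqref{standard_op_prob} is empty, so no vector \([\bar{e}_1, \ldots, \bar{e}_N]^T\) can satisfy the constraints --- in particular none can be optimal. Taking the contrapositive, the existence of any feasible, and hence of any optimal, solution forces \(T\sum_{i=1}^{N}P_i^{max} \geq \bar{e}\), which is the desired implication.

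Because both directions reduce to quoting results already in hand, I do not anticipate a genuine obstacle in the argument itself; the only point requiring care is the bookkeeping around Assumption \ref{cost-assumption}. Since it enters only through Lemma \ref{lemma1}, I would phrase the conclusion as ``under Assumption \ref{cost-assumption}, the capacity condition holds if and only if a solution exists,'' making explicit that the assumption underwrites attainment of the optimizer in the forward direction while the reverse direction rests solely on emptiness of the feasible set. A secondary subtlety worth flagging is the gap between a merely feasible point and a global optimizer: Lemma \ref{lemma2} kills feasibility, which is the stronger statement needed for the converse, whereas Lemma \ref{lemma1} supplies an optimizer, which is the stronger statement needed for the forward direction, so the two lemmas dovetail cleanly.
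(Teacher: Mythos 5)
Your proposal is correct and matches the paper's own treatment exactly: the paper states in its Appendix that Theorem \ref{theorem1} is a direct consequence of Lemma \ref{lemma1} (the forward direction) and Lemma \ref{lemma2} (the converse, by contraposition), which is precisely your decomposition. Your added bookkeeping about where Assumption \ref{cost-assumption} enters, and the distinction between feasibility and optimality in the two directions, is careful and consistent with the paper.
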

Theorem \ref{theorem1} provides a necessary and sufficient condition for the feasibility of \eqref{standard_op_prob}. 

When considering emergency situations where supply of energy from the PCs can't meet demand, as formalized in Theorem~\ref{theorem1}, we quantify this insight by relaxing the market cleating constraint with
\begin{equation}
	\label{ineq-market-clearing}
	\sum_{i=1}^N\bar{e}_i \leq \bar{e}.
\end{equation}

In this case where supply can't meet demand, the aggregate energy supply the PCs produce will always be less than \(\bar{e}\).
\subsection{Least-Cost Objective Trade-Off}
As stated previously, we are dealing with the problem of minimizing the total cost of the \(N\) PCs subject to the relaxed market clearing constraint.

An issue here is that minimizing the cost that is monotonically increasing subject to \eqref{ineq-market-clearing} leads to a trade-off with grid resiliency - a big objective of DR. 
\begin{theorem}
	\label{theorem2}
	If \(\exists\) a \([\bar{e}_1, ..., \bar{e}_N]^T\) that satisfies \eqref{standard_op_prob} while relaxing the market clearing constraint, then \(\bar{e}_i = TP_i^{min}\).
\end{theorem}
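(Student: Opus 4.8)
The plan is to exploit the \emph{separability} of the objective together with the monotonicity granted by Assumption~\ref{cost-assumption}. Writing the candidate point as \(\bar{e}^\star = [TP_1^{min}, \dots, TP_N^{min}]^T\), the strategy is to show (i) that \(\bar{e}^\star\) is feasible for the problem~\eqref{standard_op_prob} under the relaxed constraint~\eqref{ineq-market-clearing}, and (ii) that every other feasible point has objective value no smaller than that at \(\bar{e}^\star\). Since the objective \(\sum_{i=1}^N C_i(\bar{e}_i,\theta_i)\) is a sum of terms each depending on a single coordinate \(\bar{e}_i\), and each \(C_i\) is monotonically increasing in \(\bar{e}_i\), the minimizer should drive every coordinate to its smallest admissible value, which the lower box constraint fixes at \(TP_i^{min}\).

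The key steps, in order, are as follows. First I would verify feasibility of \(\bar{e}^\star\). The box constraints \(TP_i^{min}\le \bar{e}_i^\star \le TP_i^{max}\) hold trivially since \(P_i^{min}\le P_i^{max}\). For the relaxed constraint~\eqref{ineq-market-clearing}, I would use the premise of this section, \(T\sum_{i=1}^N P_i^{max} < \bar{e}\), together with \(P_i^{min} < P_i^{max}\), to conclude \(T\sum_{i=1}^N P_i^{min} < T\sum_{i=1}^N P_i^{max} < \bar{e}\), so the sum constraint is satisfied, indeed with slack. Second, I would observe that \emph{any} feasible \([\bar{e}_1,\dots,\bar{e}_N]^T\) must satisfy \(\bar{e}_i \ge TP_i^{min}\) for every \(i\) by the lower box constraint. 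Applying monotonicity coordinatewise gives \(C_i(\bar{e}_i,\theta_i) \ge C_i(TP_i^{min},\theta_i)\) for each \(i\), and summing over \(i\) yields \(\sum_i C_i(\bar{e}_i,\theta_i) \ge \sum_i C_i(TP_i^{min},\theta_i)\). Hence \(\bar{e}^\star\) attains the global minimum.

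The main thing to get right, rather than a genuine obstacle, is the feasibility check: the conclusion hinges on \(\bar{e}^\star\) lying in the feasible set, which is exactly where the emergency premise \(T\sum_{i=1}^N P_i^{max} < \bar{e}\) (the regime of Lemma~\ref{lemma2}) is used; without it the lower-bound point need not satisfy~\eqref{ineq-market-clearing}. Note that convexity of \(C_i\) is not needed for optimality here, only monotonicity, and that the relaxed sum constraint is an \emph{upper} bound and therefore never forces any \(\bar{e}_i\) above \(TP_i^{min}\). Finally, for the uniqueness that the statement asserts, I would read monotonically increasing as \emph{strictly} increasing, as is standard for the distributed-generator cost models underlying Assumption~\ref{cost-assumption}: then each coordinatewise inequality \(C_i(\bar{e}_i,\theta_i) \ge C_i(TP_i^{min},\theta_i)\) is strict whenever \(\bar{e}_i > TP_i^{min}\), forcing any optimal solution to satisfy \(\bar{e}_i = TP_i^{min}\) exactly. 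A KKT cross-check, in which the multiplier on~\eqref{ineq-market-clearing} vanishes because that constraint is slack while the lower-bound multipliers balance the positive gradients \(\partial C_i/\partial \bar{e}_i\), confirms the same conclusion.
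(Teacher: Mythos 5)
Your proposal is correct and takes essentially the same route as the paper's proof: the separability of the objective plus the monotonicity of each \(C_i\) in \(\bar{e}_i\) from Assumption~\ref{cost-assumption} drives every coordinate to its lower bound \(TP_i^{min}\), which is exactly the paper's (one-line) argument. Your write-up is in fact more careful than the paper's, since you explicitly check feasibility of the lower-bound point and note that the equality \(\bar{e}_i = TP_i^{min}\) requires reading ``monotonically increasing'' as strict---both points the paper's proof leaves implicit.
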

This is because with a monotonically increasing cost function found in Assumption \ref{cost-assumption}, \([\bar{e}_1, ..., \bar{e}_N]^T\) would minimize the total cost by dispatching the minimum amount of energy, \(TP_i^{min}\), the PCs can generate while in the program.

This result is troublesome as the ultimate goal of a DR scheme during SSOs is to minimize the total shortage. Using a least-cost objective would necessarily decrease the system's resilience, and would render any incentive-based DR scheme for energy generation during emergency situations useless. Therefore, we consider adding to the least-cost objective the objective of minimizing the shortage in the sequel.

\subsection{Resilience Objective}
Since the DU cannot tell PCs to generate sufficient energy, we add a second objective to maximize the amount of energy generated. This is equivalent to minimizing the shortage during the SSO. Therefore, we consider another objective
\begin{equation}
	\label{new-obj}
	\max_{[\bar{e}_1,...,\bar{e}_i]}\sum_{i=1}^N\bar{e}_i.
\end{equation}

We characterize the solution to this problem in the next Remark.

\begin{lemma}
	\label{lemma3}
	The optimization of \eqref{new-obj} with the same constraints as \eqref{standard_op_prob} has multiple optimizers if the generator constraints for each PC \(i\) are not all tight at the optimal solution of the new problem.
\end{lemma}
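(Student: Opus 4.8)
The plan is to treat \eqref{new-obj} as maximizing the linear objective \(\sum_{i=1}^N \bar{e}_i\) over the feasible set cut out by the box constraints \(TP^{min}_i \le \bar{e}_i \le TP^{max}_i\) together with the relaxed market-clearing constraint \eqref{ineq-market-clearing}, \(\sum_{i=1}^N \bar{e}_i \le \bar{e}\). This feasible set is a compact convex polytope and the objective is linear, so a maximizer exists and the set of maximizers is a face of the polytope; ``multiple optimizers'' is then equivalent to this face failing to be a single point. Throughout I assume \(N \ge 2\) (there is more than one PC) and I use the strict generator bounds \(P^{max}_i > P^{min}_i\). Let \([\bar{e}^*_1,\dots,\bar{e}^*_N]^T\) be an optimizer at which the generator constraints are \emph{not} all tight, i.e.\ there is an index \(k\) with \(TP^{min}_k < \bar{e}^*_k < TP^{max}_k\). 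The goal is to construct a second, distinct optimizer.

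First I would show that the market-clearing constraint is active at the optimizer. If instead \(\sum_{i=1}^N \bar{e}^*_i < \bar{e}\), then because \(\bar{e}^*_k < TP^{max}_k\) one could raise \(\bar{e}_k\) by a small \(\delta > 0\) while keeping both its box constraint and \eqref{ineq-market-clearing} satisfied, strictly increasing \(\sum_i \bar{e}_i\) and contradicting optimality. Hence \(\sum_{i=1}^N \bar{e}^*_i = \bar{e}\), and any feasible perturbation that leaves the total \(\sum_i \bar{e}_i\) unchanged is automatically still optimal. (Under the stricter reading in which \eqref{new-obj} inherits the equality constraint of \eqref{standard_op_prob}, this step is immediate.)

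The core of the argument is an exchange between the interior generator \(k\) and a suitable partner. Since \(\bar{e}^*_k\) lies strictly inside its box it can move in both directions. Define \(U = \{\,j \ne k : \bar{e}^*_j < TP^{max}_j\,\}\) and \(D = \{\,j \ne k : \bar{e}^*_j > TP^{min}_j\,\}\). Because \(N \ge 2\) there is at least one index \(j \ne k\), and by \(P^{max}_j > P^{min}_j\) this \(j\) cannot sit at both its lower and upper bound at once, so \(U \cup D \ne \emptyset\). If some \(j \in U\), form \(\bar{e}_k \mapsto \bar{e}^*_k - \delta\), \(\bar{e}_j \mapsto \bar{e}^*_j + \delta\); otherwise pick \(j \in D\) and use \(\bar{e}_k \mapsto \bar{e}^*_k + \delta\), \(\bar{e}_j \mapsto \bar{e}^*_j - \delta\). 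For all sufficiently small \(\delta > 0\) both box constraints remain satisfied, the total \(\sum_i \bar{e}_i\) is unchanged (so \eqref{ineq-market-clearing} still holds with equality), and by the previous step the new point is again optimal. Letting \(\delta\) range over a small interval yields infinitely many optimizers, which proves the claim.

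The step I expect to be the main obstacle is guaranteeing the compensating generator, i.e.\ ruling out the degenerate configuration in which the interior generator \(k\) has no partner. This is exactly where the hypotheses \(N \ge 2\) and the strict bounds \(P^{max}_j > P^{min}_j\) are indispensable: they are what forbid every other generator from being pinned simultaneously at its maximum (no room up) and its minimum (no room down). For \(N = 1\) the statement in fact fails, since a lone interior variable is pinned by \(\sum_i \bar{e}_i = \bar{e}\); flagging this boundary case and invoking the strict-bound assumption is the part of the write-up needing the most care.
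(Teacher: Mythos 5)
Your proof is correct and takes essentially the same approach as the paper's: a sum-preserving exchange between two coordinates of an optimizer, producing a second optimizer and thereby contradicting uniqueness. Your write-up is in fact more careful than the paper's own proof (which perturbs the \(\min\) and \(\max\) components by an \(x\) satisfying \(\min\{\mathbf{e}\} < x < \max\{\mathbf{e}\}\), a choice that breaks down, e.g., when all components of the optimizer are equal, and which never checks the lower bound on the decreased component), since you establish activity of the relaxed market-clearing constraint, allow the compensating partner to move in either direction, and correctly flag the \(N \geq 2\) boundary case where the statement fails.
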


Therefore, for cases where supply of energy is greater than demand, the DU still has to decide which optimal solution it will use for the allocation. In this case, we recommend the DU to use the standard least-cost dispatch.
\subsection{Multi-Objective Optimal Dispatch}
Lastly, because of the trade-off between least cost and least shortage, DUs generally don't want to focus a big chunk their resources in minimizing their energy due to a loss of revenue. Therefore, we formulate the modified optimal dispatch problem as a multi-objective convex optimization problem~\cite{Convex-Optimization}.

 With the objectives and the constraints laid out, the generalized optimal dispatch problem, after regularization~\cite{Convex-Optimization}, is then
\begin{equation}
\label{relaxed-optimization}
\min_{[\bar{e}_1, ..., \bar{e}_i]}\lambda \sum_{i=1}^{N}C_i(\bar{e}_i, \theta_i) -(1-\lambda)\sum_{i=1}^N\bar{e}_i.
\end{equation}
subject to:
\begin{displaymath}
P^{min}_i \leq \frac{\bar{e}_i}{T} \leq P^{max}_i \text{ } \forall i
\end{displaymath}
\begin{displaymath}
\sum_{i=1}^N\bar{e}_i \leq \bar{e}.
\end{displaymath}
Since the model is convex~\cite{Convex-Optimization}, all we need to do is prove that there is a feasible set to state a necessary and sufficient condition for global optimality.
\begin{theorem}
	\label{theorem3}
	The feasible set of \eqref{relaxed-optimization} is non-empty.
\end{theorem}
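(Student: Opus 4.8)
The plan is to prove non-emptiness constructively, by exhibiting a single point that satisfies every constraint of \eqref{relaxed-optimization}. Since the feasible set is defined only by the box constraints \(P_i^{min} \leq \bar{e}_i/T \leq P_i^{max}\) and the relaxed inequality \(\sum_{i=1}^N \bar{e}_i \leq \bar{e}\), the objective and its convexity play no role here; I only need a point in the intersection of the box \(\prod_i [TP_i^{min}, TP_i^{max}]\) with the half-space \(\{\sum_i \bar{e}_i \leq \bar{e}\}\).

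The natural candidate is the all-minimum allocation \(\bar{e}_i = TP_i^{min}\) for every \(i\), which is exactly the least-cost minimizer identified in Theorem \ref{theorem2}. First I would verify the box constraints: dividing by \(T\) gives \(\bar{e}_i/T = P_i^{min}\), and since the assumption on minimum power consumption gives \(P_i^{max} > P_i^{min} > 0\), we have \(P_i^{min} \leq P_i^{min} \leq P_i^{max}\), so each component lies in its admissible interval. Next I would check the relaxed market-clearing constraint \eqref{ineq-market-clearing}. Summing the candidate gives \(\sum_{i=1}^N \bar{e}_i = T\sum_{i=1}^N P_i^{min}\). Because we are in the emergency regime of this section, where \(T\sum_{i=1}^N P_i^{max} < \bar{e}\) (the hypothesis of Lemma \ref{lemma2} and Theorem \ref{theorem1}), and because \(P_i^{min} < P_i^{max}\) termwise, it follows that \(T\sum_{i=1}^N P_i^{min} < T\sum_{i=1}^N P_i^{max} < \bar{e}\). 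Hence the candidate satisfies \(\sum_i \bar{e}_i \leq \bar{e}\) strictly, and therefore lies in the feasible set, which proves it is non-empty.

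The argument is almost entirely routine; there is no genuinely hard step. The one place where care is needed --- and indeed the whole point of the theorem --- is the sum constraint. Under the original equality \(\sum_i \bar{e}_i = \bar{e}\), the all-minimum point would generically violate market clearing (Lemma \ref{lemma2} shows the feasible set is empty once \(T\sum_i P_i^{max} < \bar{e}\)); it is precisely the relaxation to the inequality \eqref{ineq-market-clearing} that admits this point. More generally, the same construction shows feasibility whenever \(T\sum_i P_i^{min} \leq \bar{e}\), so I would remark that this mild condition, rather than the stronger emergency hypothesis, is all that is truly required for the feasible set to be non-empty.
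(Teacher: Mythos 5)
Your proof is correct and takes essentially the same route as the paper's: both arguments are constructive, exhibiting an explicit point that satisfies the box constraints and the relaxed constraint \eqref{ineq-market-clearing} under the emergency hypothesis \(T\sum_{i=1}^{N}P_i^{max} < \bar{e}\). The only difference is the choice of witness --- you use the all-minimum allocation \(\bar{e}_i = TP_i^{min}\), while the paper uses the box midpoint \(\bar{e}_i = T(P_i^{min}+P_i^{max})/2\) --- and your choice is marginally sharper, since it makes the box constraints trivial and shows that the weaker condition \(T\sum_{i=1}^{N}P_i^{min} \leq \bar{e}\) already suffices for feasibility.
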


Notice how the regularization of the problem differs slightly from the standard Tikhonov Regularization~\cite{Convex-Optimization}. This has some analytic and interpretive significance. The solution to the optimal dispatch problem is going to be a Pareto Frontier~\cite{Convex-Optimization} from \(\lambda \in [0,1]\) instead of \(\lambda \in \mathbb{R}_{\geq 0}\). 

The value of \(\lambda\) then could be interpreted as the ratio of how much the DU should prioritize ones cost objective over the resilience objective. If \(\lambda =0\) then the firm prioritizes resilience. The cost objective is prioritized when \(\lambda = 1\). With this framework, firms can do a sensitivity analysis on \(\lambda\) to see whether or not they should opt to prioritize minimizing the shortage or minimizing their costs.  If we characterize the Pareto Frontier of the optimal solution, we are able to see the effectiveness of the program when one prioritizes one cost over resilience.

\section{APPLICATION AND SIMULATION RESULTS}
In this section, we provide a simple analysis of the claim above. Consider the quadratic fuel cost function \eqref{fuel-cost}. We consider three cases for this application:
\begin{enumerate}
	\item \(\lambda\) Sensitivity Analysis
	\item Cost-Based Optimal Dispatch
\end{enumerate}
\subsection{Assumed Cost Function}
In general, any cost function that satisfies the properties found in any standard graduate microeconomic theory textbook~\cite{MWG} will work, but as is usual for power engineering, we modify the standard quadratic fuel cost curve for the purpose of application~\cite{elec-market}.

Assuming the generator generates an average power \(\frac{\bar{e}}{T}\), the fuel cost function is 

\begin{equation}
\label{fuel-cost}
C_i(\bar{e}_i, \theta_i)=T[a_2\theta_i(\frac{\bar{e}_i}{T})^2 + a_1\theta_i(\frac{\bar{e}_i}{T}) + a_0\theta_i]
\end{equation}
for some cost parameters \(a_2, a_1, a_0 \in \mathbb{R}, a_2 > 0\), and length of shortage \(T\).

The quadratic component corresponds to all nonlinear variable costs present due to system nonlinearities such as starting and stopping the generator. The linear term is a variable cost that represents the cost of operating and maintaining the generator at steady-state. The \(a_0\) term represents all fixed costs~\cite{elec-market}.
\subsection{Optimal Dispatch Simulation Parameters}
We consider implementing the analysis of the optimal dispatch problems with 5 PCs. We  set \(P_i^{min}=30 kW\) \(\forall i\). The maximum average generation for each PC to be \(P^{max}=[60,100, 125, 85, 130]^T kW\). We consider an ILP where \(T=1 hr\). We consider the generator parameters given in Table \ref{gen-par}~\cite{gen-parameters}.
\begin{table} [ht]
	\caption{Generator Parameters}
	\centering
	\begin{tabular}{c c c c c c}
		\hline \hline
		Parameter & PC 1 & PC 2 & PC 3 & PC 4 & PC 5\\ [0.5ex]
		\hline
		\(a_0 \theta\)  & \(96.6\) & \(96.6046\) & \(96.279\) & \(100.3937\) & \(95.856\) \\
		\(a_1 \theta\) & \(7.588\) & \(7.5874\) & \(7.592\) & \(6.9761\) & \(7.374\) \\
		\(a_2 \theta\) & \(0.0414\) & \(0.0414\) & \(0.042\) & \(0.0533\) & \(0.047\) \\ [1ex]
		\hline
	\end{tabular}
	\label{gen-par}
\end{table}

\subsection{\(\lambda\) Sensitivity Analysis}
We consider \(P_i^{min}=30kW\) \(\forall i\), and \(\bar{e}=700kWhr\) because \(700 kWhr > 500kWhr = T\sum P_i^{max}\) In this case, we consider how the dispatch changes when the parameter \(\lambda\) changes.

Figure 1 shows how the dispatch is sensitive to \(\lambda\).
\begin{figure} [ht]
	\label{dispatch_700}
	\includegraphics[width=\linewidth]{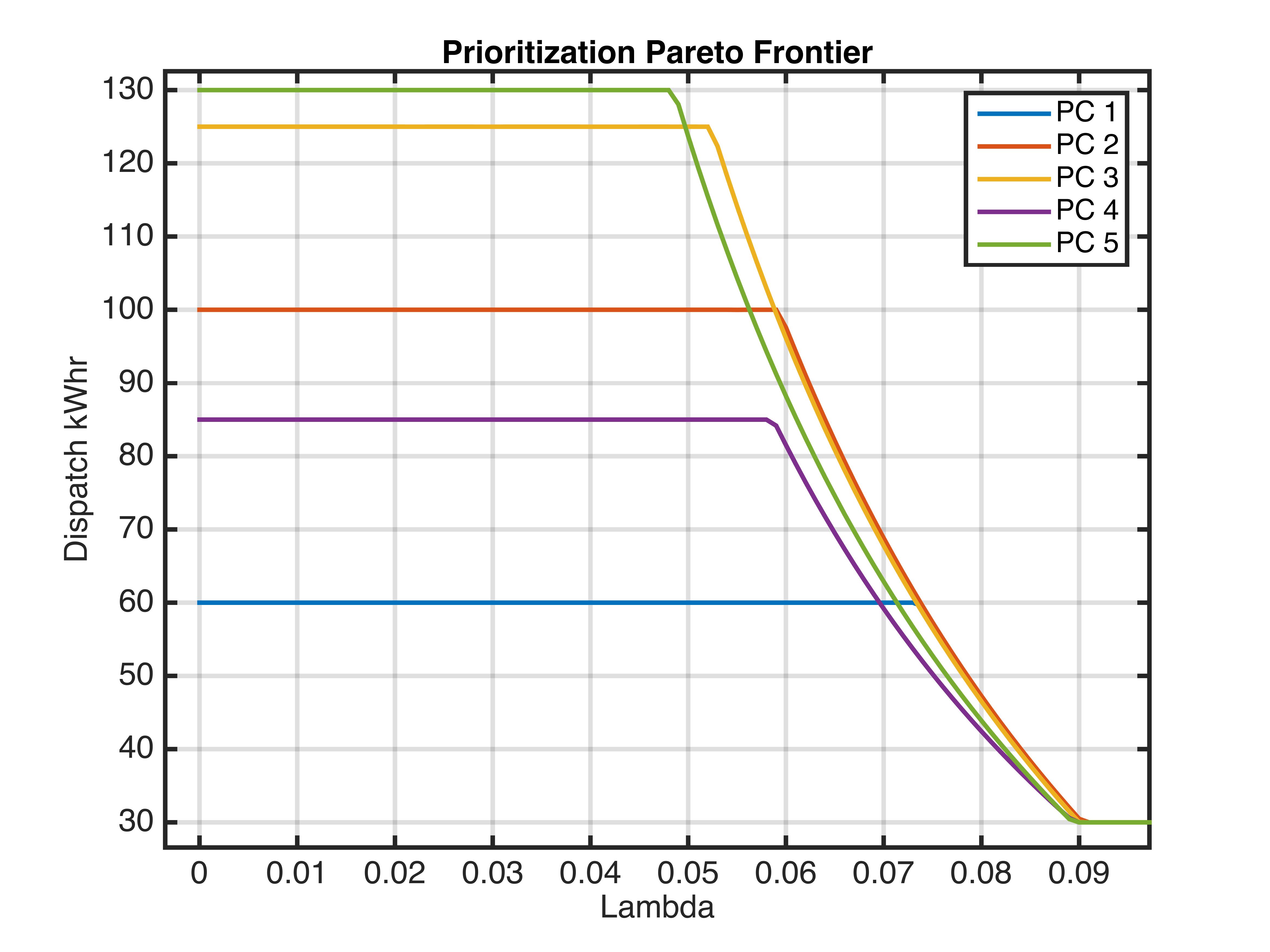}
	\caption{PC Relaxed Optimal Dispatch Sensitivity With Respect to \(\lambda\)}
\end{figure}

We observe that the dispatch remains constant until around \(\lambda=0.05\). This implies that the effect of minimizing costs starts to dominate even when the DU prioritizes maximizing energy generated more. In fact, the cost objective totally dominates when \(\lambda \geq 0.09\), which is less than \(\lambda=0.5\) where the DU prioritizes the two objectives equally. Therefore, when an SSO occurs during an emergency situation, the DU must decide to prioritize maximizing energy generated more than minimizing the total cost to generate.
\subsection{Cost-Based Optimal Dispatch}
For cases where the supply of electricity is greater than the demand due to a shortage, we use Remark \ref{lemma3} to justify using a Cost-Based Optimal Dispatch.

We consider checking the sensitivity of the Strict Dispatch from \(\bar{e}=150kWhr\) to \(\bar{e}=500kWhr\). \(150kWhr\) was chosen because the \(\sum P_i^{min} = 150kW.\)

Figure 2 shows how the optimal dispatch changes when the total shortage increases from \(150kWhr\) to \(500kWhr\).
\begin{figure} [ht]
	\label{sensitivity_DP}
	\includegraphics[width=\linewidth]{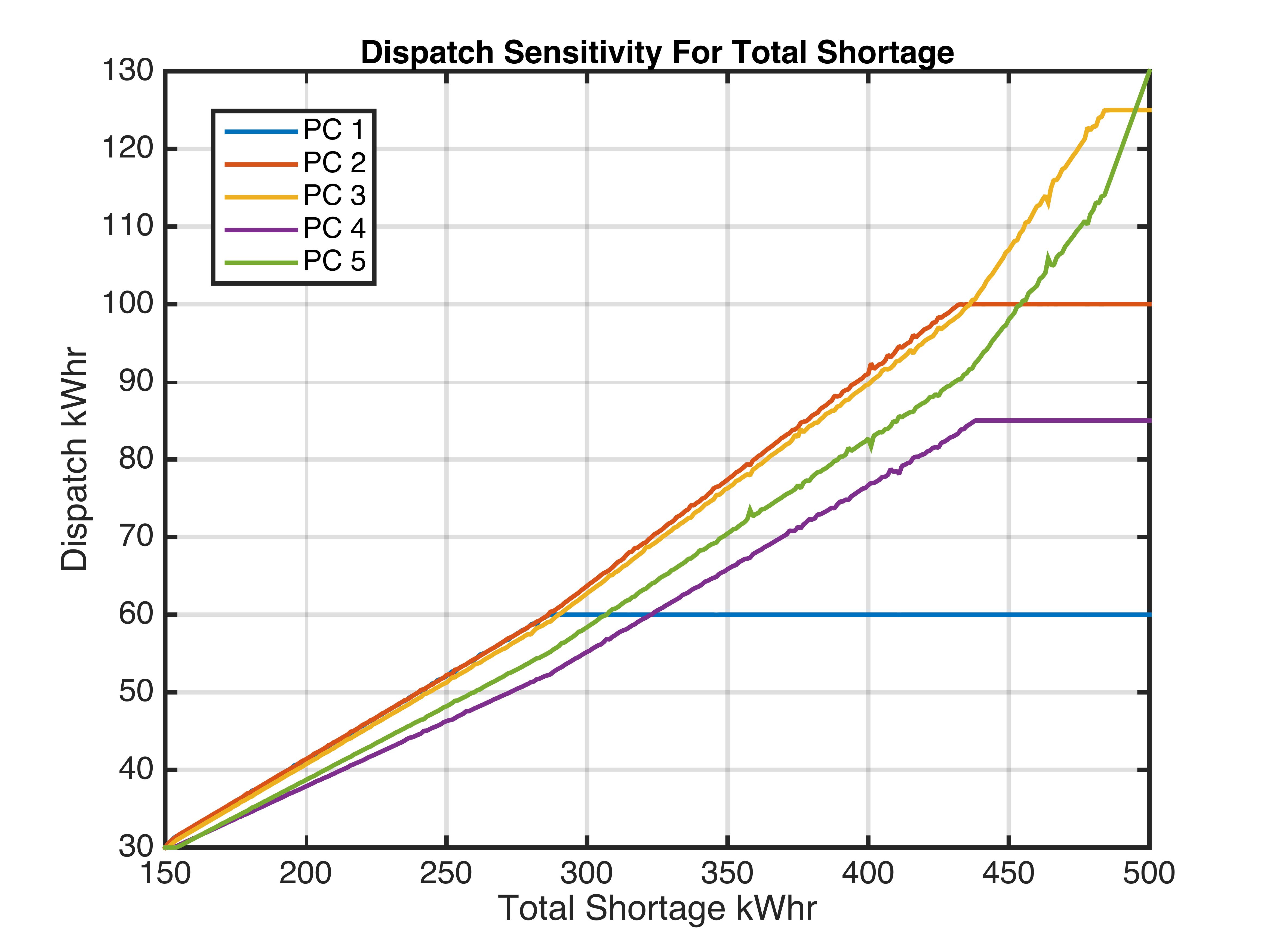}
	\caption{PC Dispatch Sensitivity With Respect to \(\bar{e}\)}
\end{figure}

It can be seen that the dispatch monotonically increases with respect to \(\bar{e}\) until one PC's full capacity is reached. We observe the behavior that when firm \(j\) reaches its full capacity, the rate of change of the dispatch \(\forall i \neq j\) increases. This implies that the DU demands more from the other PCs marginally when PC \(i\) reaches full capacity.

\section{CONCLUSION AND FUTURE WORK}
In this paper, we develop an optimal dispatch model for allocating generation during supply shortage outages (SSO) in emergency situations. This is for the development of the Interruptible Load Program. We develop a model for optimal dispatch for SSOs by modifying the least-cost optimal dispatch found in standard literature~\cite{elec-market}.

Using the modified model, we have found that when the supply of energy provided by the contracted establishments is less than the shortage demand, there is a trade-off between the resilience of the dispatch with the cost. This is because using the least-cost approach may lead to the inefficient implementation of the program since grid resilience will increase minimally. Mathematically, this inefficiency is caused by the reasonable assumption that costs are monotonically increasing in the amount of energy. Characterizing this inefficiency is crucial for the interruptible load program as grid resilience should be of top priority during emergency situations.

Though, for special cases where all the total number of PCs can cover the energy shortage, we recommend using a least-cost approach as it will not lead to the inefficient implementation of the mechanism.

In this model, we have considered PCs that are taken to be price-takers. Also, in this model, we have assumed that the generation prices and valuations are known such that the PCs can't cheat the system by raising prices. Lastly, the model breaks down in the case where the cost is not a linear function of price~\cite{MWG}.

In future studies, we consider energy with significant market power such that they can influence the price. We also consider the case where the energy valuations of each PC is unknown. This can be done by implementing through either a uniform or discriminatory price auction. Finally, we consider models where the costs are not a linear function of the unit price commonly found in contract models~\cite{MWG}.



\section*{APPENDIX}
In this section, we provide the proofs for Lemma \ref{lemma1}, Lemma \ref{lemma2}, Theorem \ref{theorem2}, Lemma \ref{lemma3}, and Theorem \ref{theorem3}. Theorem \ref{theorem1} is not included as it is a direct consequence of Lemmas \ref{lemma1} and \ref{lemma2}.
\begin{proof}[Proof of Lemma \ref{lemma1}]
	Let \(X_i =[P_i^{min}, P_i^{max}] \in \mathbb{R}_{> 0}\). Since \(P_i^{max} = \sup\{X_i\}\), \(\sum_{i=1}^{N}TP_i^{max} \geq \bar{e}\), and \(P_i^{max} \geq 0\), take \(\bar{P}_i \geq 0\) \(\forall i\) s.t. \(\sum_{i=1}^{N}T(P_i^{max} - \bar{P}_i) = \bar{e}\). 
	
	Since \(P_i^{max} - \bar{P}_i \in X_i\), then \(\exists\) \([x_1 \in X_1, x_2 \in X_2,...,x_N \in X_N]^T\) that satisfies \eqref{standard_op_prob}.
	
	Additionally, since Assumption \ref{cost-assumption} implies that the problem is convex~\cite{Convex-Optimization}, then the solution is a global optima.
\end{proof}
\begin{proof}[Proof of Lemma \ref{lemma2}]
	Let \(X_i =[P_i^{min}, P_i^{max}] \subset \mathbb{R}_{> 0}\). Since \(\sum_{i=1}^{N}TP_i^{max} < \bar{e}\), and \(P_i^{max} = \sup\{X_i\}\), then any \(x_i \in X_i\) \(\forall i\) will lead to \(\sum_{i=1}^{N}Tx_i < \sum_{i=1}^{N}TP_i^{max} < \bar{e}\).
	
	Therefore, \(\nexists\) \([x_1 \in X_1, x_2 \in X_2,...,x_N \in X_N]^T\) that satisfies \eqref{market-clearing}. Therefore, the feasible set of \eqref{standard_op_prob} is empty.
\end{proof}
\begin{proof}[Proof of Theorem \ref{theorem2}]
	Let \(X_i =[P_i^{min}, P_i^{max}] \in \mathbb{R}_{> 0}\). Since the cost function is monotonically increasing, \(\arg\min_{\bar{e}} C_i(\bar{e}_i, \theta_i)\) subject to the minimimum generation constraint is \(TP_i^{min}.\)
\end{proof}
\begin{proof}[Proof of Lemma \ref{lemma3}]
	Restating the contrapositive, we have "If the solution to \eqref{standard_op_prob} replacing the objective with \eqref{new-obj} is unique, then the generator constraints for each PC \(i\) are all tight in the new problem." We prove this by contradiction.
	
	Let \(\mathbf{e} = [\bar{e}_{11}, ..., \bar{e}_{1i}]\) be an optimal solution to the problem. Let \(x \in \mathbb{R}_{\geq 0}\) where \(\min\{\mathbf{e}\} < x < \max\{\mathbf{e}\}\) s.t. \(\min\{\mathbf{e}\} + x\) is less than \(TP^{max}_j\) for its corresponding 
	\(j \in \{1,...,N\}\). This is possible by the completeness of the reals.
	
	We can take the vector \(\mathbf{e}^* = [\bar{e}_{11}, ..., \bar{e}_{1i}]\) s.t. \(\max\{\mathbf{e}^*\} = \max\{\mathbf{e}\} - x\), and \(\min\{\mathbf{e}^*\} = \min\{\mathbf{e}\} + x\). Since this is also an optimal solution, we have a contradiction on uniqueness - proving the contrapositive. \(\perp\)
\end{proof}
\begin{proof}[Proof of Theorem \ref{theorem3}]
	Let \(X_i =[P_i^{min}, P_i^{max}] \in \mathbb{R}_{> 0}\). Take \(\frac{\bar{e}_i}{T}= \frac{P_i^{max}+P_i^{min}}{2}\). Thus, \(\sum_{i=1}^{N}\frac{\bar{e}_i}{T}=\sum_{i=1}^{N}\frac{P_i^{max}+P_i^{min}}{2} \leq \sum_{i=1}^{N}P_i^{max}\). 
	
	Therefore, since \(\sum_{i=1}^{N}\frac{\bar{e}_i}{T}\leq \sum_{i=1}^{N}P_i^{max}\), then \(\sum_{i=1}^{N}\bar{e}_i\leq \sum_{i=1}^{N}TP_i^{max}=\bar{e}\). Therefore, The feasible set is non-empty.
\end{proof}
\section*{ACKNOWLEDGMENT}
This work was done through the financial support from the Philippine Commission of Higher Education under the Philippine-California Advanced Research Institution program.



\begin{thebibliography}{99}

\bibitem{intro} P. L. Viray, “Gov't-owned companies expected to join Interruptible Load Program,” philstar.com.
\bibitem{DSM} P. Palensky and D. Dietrich, “Demand Side Management: Demand Response, Intelligent Energy Systems, and Smart Loads,” IEEE Transactions on Industrial Informatics, vol. 7, no. 3, pp. 381–388, Aug. 2011.
\bibitem{DR-berk}	J. Dudley and M. Piette, “Solutions for Summer Electric Power Shortages: Demand Response and Its Application in Air Conditioning and Refrigerating Systems,” Refrigeration, Air Conditioning, amp; Electric Power Machinery, vol. 29, pp. 1–4, 2008.
\bibitem{DR_def}	US Department of Energy, “Energy policy Act of 2005.” .
\bibitem{elec-market} D. R. Biggar and M. Hesamzadeh, The economics of electricity markets. Chichester, West Sussex, United Kingdom: Wiley, 2014.
\bibitem{MWG}	A. Mas-Colell, M. D. Whinston, and J. R. Green, Microeconomic theory. New York, NY: Oxford Univ. Press, 1995.
\bibitem{DR-over}	M. H. Albadi and E. F. El-Saadany, “Demand Response in Electricity Markets: An Overview,” in 2007 IEEE Power Engineering Society General Meeting, 2007, pp. 
\bibitem{RRL1} R. N. Boisvert and B. F. Neenan, “Social Welfare implications of demand response programs in competitiv e electricity markets,” Lawrence Berkeley National Laboratory, Aug. 2003.
\bibitem{RRL2} Q. Dong, L. Yu, W. Z. Song, L. Tong, and S. Tang, “Distributed Demand and Response Algorithm for Optimizing Social-Welfare in Smart Grid,” in 2012 IEEE 26th International Parallel and Distributed Processing Symposium, 2012, pp. 1228–1239.
\bibitem{RRL3} K. R. Gadham and T. Ghose, “Importance of social welfare point for the analysis of demand response,” in 2016 IEEE First International Conference on Control, Measurement and Instrumentation (CMI), 2016, pp. 182–185.
\bibitem{RRL4} N. Li, L. Chen, and S. H. Low, “Optimal demand response based on utility maximization in power networks,” in 2011 IEEE Power and Energy Society General Meeting, 2011, pp. 1–8.
\bibitem{RRL5} M. Nijhuis, M. Babar, M. Gibescu, and S. Cobben, “Demand Response: Social Welfare Maximization in an Unbundled Energy Market Case Study for the Low-Voltage Networks of a Distribution Network Operator in The Netherlands,” IEEE Transactions on Industry Applications, vol. 53, no. 1, pp. 32–38, Jan. 2017.
\bibitem{RRL6} M. Sabounchi, H. Khazaei, and S. K. H. Sani, “A new residential demand response management method based on a social welfare optimization framework,” in 2014 2nd International Conference on Electrical, Electronics and System Engineering (ICEESE), 2014, pp. 146–151.
\bibitem{RRL7} L. Chen, N. Li, S. H. Low, and J. C. Doyle, “Two Market Models for Demand Response in Power Networks,” in 2010 First IEEE International Conference on Smart Grid Communications, 2010, pp. 397–402.
\bibitem{RRL8} C. L. Su and D. Kirschen, “Quantifying the Effect of Demand Response on Electricity Markets,” IEEE Transactions on Power Systems, vol. 24, no. 3, pp. 1199–1207, Aug. 2009.
\bibitem{RRL9} C. Dennis Barley and C. Byron Winn, “Optimal dispatch strategy in remote hybrid power systems,” Solar Energy, vol. 58, no. 4, pp. 165–179, Oct. 1996.
\bibitem{RRL10} R. S. Fang and A. K. David, “Optimal dispatch under transmission contracts,” IEEE Transactions on Power Systems, vol. 14, no. 2, pp. 732–737, May 1999.
\bibitem{RRL11} G. L. Viviani and G. T. Heydt, “Stochastic Optimal Energy Dispatch,” IEEE Transactions on Power Apparatus and Systems, vol. PAS-100, no. 7, pp. 3221–3228, Jul. 1981.
\bibitem{RRL12} M. E. Baran and I. M. El-Markabi, “A Multiagent-Based Dispatching Scheme for Distributed Generators for Voltage Support on Distribution Feeders,” IEEE Transactions on Power Systems, vol. 22, no. 1, pp. 52–59, Feb. 2007.
\bibitem{RRL13} E. Dall’Anese, S. V. Dhople, and G. B. Giannakis, “Optimal Dispatch of Photovoltaic Inverters in Residential Distribution Systems,” IEEE Transactions on Sustainable Energy, vol. 5, no. 2, pp. 487–497, Apr. 2014.
\bibitem{ILP} D. Thimsen, “Interruptible Power Rates and Their Role in Utility Distributed Resources Programs,” 20-Feb-2003. [Online]
\bibitem{Convex-Optimization} S. P. Boyd and L. Vandenberghe, Convex optimization. Cambridge, UK ; New York: Cambridge University Press, 2004.
\bibitem{gen-parameters}S. Sayah and A. Hamouda, “Novel application of differential evolution algorithm for estimating fuel cost function of thermal generating units,” in 2015 Third World Conference on Complex Systems (WCCS), 2015, pp. 1–6.

\end{thebibliography}
\end{document}